\newcommand{\commentout}[1]{}
\newcommand{\junk}[1]{}
\newcommand{\mmr}{{\tt MMR}}
\newcommand{\dum}{{\tt DUM}}
\newtheorem{lemma}{Lemma}
\newcommand{\bw}{{\bf w}}
\newcommand{\bx}{{\bf x}}
\newcommand{\cT}{\mathcal{T}}
\newcommand{\realset}{\mathbb{R}}
\newcommand{\abs}[1]{\left|#1\right|}
\newcommand{\I}[1]{\mathds{1} \! \left\{#1\right\}}
\newcommand{\set}[1]{\left\{#1\right\}}
\DeclareMathOperator*{\argmax}{arg\,max\,}
\begin{document}

\title{DUM: Diversity-Weighted Utility Maximization for Recommendations}

\author{
{Azin Ashkan{\small $~^{*}$},
Branislav Kveton{\small $~^{*}$},
Shlomo Berkovsky{\small $~^{**}$},
Zheng Wen{\small $~^{***}$}}
\vspace{1.6mm} \\
\fontsize{10}{10}\selectfont\rmfamily\itshape
$~^{*}$Technicolor, United States \\
\fontsize{9}{9}\selectfont\ttfamily\upshape
\{azin.ashkan,branislav.kveton\}@technicolor.com
\vspace{1.2mm} \\
\fontsize{10}{10}\selectfont\rmfamily\itshape
$~^{**}$CSIRO, Australia \\
\fontsize{9}{9}\selectfont\ttfamily\upshape
shlomo.berkovsky@csiro.au
\vspace{1.2mm} \\
\fontsize{10}{10}\selectfont\rmfamily\itshape
$~^{***}$Yahoo Labs, United States \\
\fontsize{9}{9}\selectfont\ttfamily\upshape
zhengwen@yahoo-inc.com}

\maketitle

\begin{abstract}
The need for diversification of recommendation lists manifests in a number of recommender systems use cases. However, an increase in diversity may undermine the utility of the recommendations, as relevant items in the list may be replaced by more diverse ones. In this work we propose a novel method for maximizing the utility of the recommended items subject to the diversity of user's tastes, and show that an optimal solution to this problem can be found greedily. We evaluate the proposed method in two online user studies as well as in an offline analysis incorporating a number of evaluation metrics. The results of evaluations show the superiority of our method over a number of baselines.
\end{abstract}

\keywords{Recommender systems, polymatroid, diversity, utility}


\section{Introduction}
\label{sec:introduction}

The popularity of recommender systems has soared in the recent years. They are widely used in social networks, entertainment, eCommerce, Web search, and many other online services \cite{rsh2011}. Recommenders deal with the information overload problem and select items on behalf of their users. Typically, a recommender scores recommendable items according to their match to the user's preferences and interests, as encapsulated in the user profiles, and then recommends a list of top-scoring items.

A na\"{i}ve selection of top-scoring items may, however, yield a sub-optimal recommendation list. For instance, collaborative  recommenders tend to recommend most popular items appearing in the profiles of numerous users \cite{KorenB11}. While being good recommendations on their own, these items are likely to be known to the user and bear little value. Likewise, content-based recommenders may target user's favorite topics and recommend homogeneous lists that overlook other potentially interesting topics \cite{LopsGS11}. This has brought to the fore the problem of diversity in recommender systems, which has been studied in a number of works \cite{castells2011novelty,halvey2009diversity, mcnee2006being,vargas2011rank,ziegler2005improving}.

In a nutshell, the diversity problem deals with the construction of recommendation lists that cover as wide as possible range of topics of interest. The problem is particularly acute for users with eclectic interests, having no single dominant topic but rather interested in multiple topics. In this case, it is important for the recommendation list to include items that touch upon many topics, in order to increase the chance of answering the current user's need. Repercussions of the diversity problem can be recognized also in other recommender system use cases. Consider the group recommendation problem in heterogeneous (in terms of interests) groups. Another example of the need for diversity is in sequential recommendations, like in the music domain. In both cases, the recommendation list should incorporate diverse items that either appeal to a number of group members or represent a number of music genres~\cite{zhang2012auralist}.

The need for diversity manifests itself also beyond recommender systems. Consider an ambiguous Web search query. Having no knowledge about the context of the query, a search engine may present results pertaining to different interpretations of the query, so that the user can pick the desired one and reformulate the query~\cite{capannini2011efficient}. Another instance comes from text summarization. Unless the desired topic of the summary is known, it should include references to as many aspects of the original document as possible. Also, diversification may be useful in computer supported collaborative work. There, formation of virtual groups may need to bring together users with complementary skills and expertise areas, such that the diversity of the group is important.

In all the above diversification use cases, it is of paramount importance to maintain the trade-off between increasing the list diversity and maintaining the utility of the results~\cite{carbonell1998use, jannach2013recommenders, zhou2010solving}. Diversity typically comes at the account of decreasing the relevance of items, as relevant but redundant items are substituted with less relevant but more diverse ones. Hence, there is a need to strike the balance between the two objectives~\cite{carbonell1998use}, a modular relevance function and a submodular diversity function, for which an approximation to the optimal solution can be computed greedily \cite{nemhauser78approximation}.

In this work, we introduce a different objective diversification function and show that an optimal solution to the diversity problem can be found greedily. We propose a parameter-free method, denoted as \textit{diversity-weighted utility maximization} ($\dum$), which maximizes the utility of the items recommended to users, subject to the diversity of their interests. We cast this problem as finding the maximum of a modular function subject to a submodular constraint~\cite{edmonds70submodular}, which is known to have an optimal greedy solution. This solution guarantees that items in the recommendation list cover different interests in the user profile, such that each topic of interest is represented by items with high utility. In other words, the utility of items remains the primary concern, but it is subjective to maintaining the diversity and avoiding redundancy in the list. We discuss several interpretations of $\dum$ and identify suitable submodular diversity functions.

We conduct an extensive evaluation of the proposed approach. We present two online studies using crowdsourcing, which compare the perceived quality of the lists generated by $\dum$ with baseline settings maximizing a linear combination of utility and diversity. The results show the superiority of the lists generated by $\dum$ over the baseline methods and we characterize the cases when this superiority is prominent. We also present an offline evaluation that applies a variety of metrics to (a) exemplify the trade-off between diversity and utility in recommendations; and (b) demonstrate that $\dum$ successfully outperforms the baselines. Overall, our analyses demonstrate that $\dum$ can effectively deliver personalized recommendations with high degree of utility and diversity, while not requiring a-priori parameter tuning.

In summary, the contribution of this work is two-fold. Firstly, we propose a parameter-free and computationally efficient method aimed at improving the diversity of the recommendation lists, while maintaining their utility. Secondly, we present experimental evaluations -- online user studies and offline experiments alike -- that demonstrate solid empirical evidence supporting the validity of the proposed approach.

\textbf{Note:} The following notation is used throughout this paper. Let $A$ and $B$ be sets, and $e$ be an element of a set. We use $A + e$ instead of $A \cup \set{e}$ and $A + B$ instead of $A \cup B$. Furthermore, we use $A - e$ instead of $A \setminus \set{e}$ and $A - B$ instead of $A \setminus B$. We represent \emph{ordered sets} by vectors and also refer to them as \emph{lists}.


\section{Related Work}
\label{sec:related work}


A common approximation to diversified ranking is based on the notion of \textit{maximal marginal relevance ($\mmr$)} proposed by Carbonell and Goldstein~\cite{carbonell1998use}. In this approach, utility (e.g., relevance) and diversity are represented by independent metrics. Marginal relevance is defined as a weighted combination of these two metrics, to account for the trade-off between utility and diversity. Given a standard ranking of items, $R$, a diversified re-ranking of these items, $S$, is created (Algorithm~\ref{alg:MMR}). In each iteration, an item $e^\ast \in R - S$ is chosen, such that it maximizes the marginal relevance: 
\begin{align}
  e^\ast = \argmax_{e \in R - S} (1-\lambda) \bw(e) + \lambda f(S + e)
\label{eq:mmr}
\end{align}
where $\bw(.)$ and $f(.)$ represent the notions of utility and diversity, respectively, and the parameter $\lambda$ controls the trade-off between the two. Typically, the utility $\bw$ is a modular function of $S$, whereas the diversity $f$ is a submodular function of $S$. The existing approaches differ in how they account for different aspects of query or user (or any other entity of interest) to model $f(.)$. 

\begin{algorithm}[!ht]
  \caption{$\mmr$: Maximal Marginal Relevance}
  \label{alg:MMR}
  \begin{algorithmic}
    \STATE {\bf Input:} 
    \STATE \quad Standard ordering of items $R$
    \STATE \vspace{-0.05in}
    \STATE $S \leftarrow (), n=|R|$
    \WHILE{$|S| < n$}
        \STATE $\displaystyle e^{\ast} \leftarrow \argmax_{e \in R - S}
        \lambda \bw(e) + (1-\lambda) f(S + e)$
        \STATE $R \leftarrow R - e^{\ast}$
        \STATE Append item $e^{\ast}$ to list $S$ 
    \ENDWHILE
     \STATE \vspace{-0.05in}
   \STATE {\bf Output:}
    \STATE \quad List of recommended items $S$
  \end{algorithmic}
\end{algorithm}

Implicit approaches assume that similar items should be penalized since they cover similar aspects. For instance, Yu et al.~\cite{yu2014latent} compute $f(S + e) = -\max_{d' \in S} \mathrm{Sim}(e, e')$ to measure the redundancy of user intent $e$ with respect to a set of selected intents $S$, where $\mathrm{Sim}(e, e')$ is the cosine similarity between the user intents $e$ and $e'$. Gollapudi and Sharma~\cite{gollapudi2009axiomatic} propose multiple diversification objectives considering the tradeoff between relevance and diversity and using various axioms, relevance functions, and distance functions. Their distance functions are defined based on various implicit metrics, e.g., document content, in order to capture the pairwise similarity between any pair of documents. 

On the other hand, explicit approaches model different aspects (e.g., query intent, query topic, or movie genre) directly, and promote diversity by maximizing the coverage of selected items with respect to these aspects. For instance, Santos et al.~\cite{santos2010exploiting} define $f(S + e) = \Sigma_{t \in \cT_q} P(t | q) P(e, \bar{S} | t)$ where $P(d, \bar{S} | t)$ represents the likelihood of document $e$ satisfying topic $t$ while the ones in $S$ fail to do so. Also, $P(t | q)$ denotes the popularity of $t$ among all possible topics $\cT_q$ that may satisfy a user's information need from issuing query $q$.

In addition to the above approaches in diversifying existing rankings, another group of work directly learns a diverse ranking by maximizing a submodular objective function. Among these approaches, Radlinski et al.~\cite{radlinski2008learning} and Yue and Guestrin~\cite{yue2011linear} propose \textit{online} learning algorithms for optimizing a class of submodular objective functions for diversified retrieval and recommendation, respectively. Agrawal et al.~\cite{agrawal2009diversifying}, on the other hand, address search result diversification in an offline setting, with respect to the topical categories of documents. The authors target the maximization of a submodular objective function following the definition of marginal relevance. 
They propose a greedy algorithm to approximate the objective function and show that an optimal solution can be found in a special case, where each document belongs to exactly one category.

Vallet and Castells~\cite{vallet2012personalized} study personalization in combination with diversity such that the two objectives complement each other in addressing various query aspects and satisfying user needs. In particular, they generalize the work of Agrawal et al.~\cite{agrawal2009diversifying} and Santos et al.~\cite{santos2010exploiting} to the personalized versions by exploiting available information about user preferences.



Most of these studies target diversity in information retrieval, while there has been a growing interest in recommendation diversification more recently. One of the initial works in recommendation diversification is by Ziegler et al.~\cite{ziegler2005improving}, who argue that user satisfaction does not solely depend on the accuracy of recommendation results. The authors propose a similarity metric, the intra-list similarity (ILS), which computes the average pairwise similarity of items in a list. A higher value of the metric denotes a lower diversity. They use this metric in their topic diversification model to control a balance between the accuracy and diversity of recommendations.

Zhang et al.~\cite{zhang2008avoiding} formulate the diversification problem as finding the best possible subset of items to be recommended over all possible subsets. They address this as the maximization of the diversity of a list of recommended items, subject to maintaining the accuracy of the items. Zhou et al.~\cite{zhou2010solving} propose a hybrid method that targets the maximization of a weighted combination of independent utility- and diversity-based approaches requiring parameter tuning to control the tradeoff between the objectives of the two approaches.

Most of the existing diversification approaches consider the maximization of an objective function to satisfy a user's need in terms of utility and diversity of the result list. Most of these approaches are based on the idea behind the maximal marginal relevance where a submodular objective function (Equation~\ref{eq:mmr}) is maximized. Therefore, a $(1-1/e)$-approximation to the optimal solution can be computed greedily~\cite{nemhauser78approximation}. This paper is an extension to our prior work in~\cite{ashkan2014diversified}, where we introduce a new objective function for recommendation diversification, the optimal solution of which can be found greedily. This objective function targets the utility as the primary concern, and maximizes it subject to maintaining the diversity of user's tastes. In this paper, we elaborate on the intuitions and details behind the proposed greedy algorithm, and provide extensive online and offline evaluations on its performance in practice. We show that this method is computationally efficient and parameter-free, and it guarantees that high-utility items appear at the top of the recommendation list, as long as they contribute to the diversity of the list. 







\section{Motivating Examples}
\label{sec:motivation}

In this section, we discuss several motivating examples for our work. A more formal description of our method and its analysis are presented in Section~\ref{sec:model}.

Consider the following recommendation problem. A system recommends to a user movies from a ground set of four movies:
\begin{center}
  \begin{tabular}{ccccc} \hline
    ID & Movie & Utility & Action & Comedy \\
    $e$ & name & $\bw(e)$ & & \\ \hline
    1 & Inception & 0.8 & X & \\
    2 & Spider-Man 2 & 0.7 & X & \\
    3 & Grown Ups 2 & 0.5 & & X \\
    4 & The Sweep & 0.2 & & X \\ \hline
  \end{tabular}
\end{center}
The user likes either \emph{Action} or \emph{Comedy} movies, depending on the mood of the user, but the system does not know the user's mood. The user chooses the first recommended movie $e$ that matches the genre that the user currently prefers and is satisfied proportionally to the utility of the movie $\bw(e)$, the probability that $e$ is liked. Our goal is to recommend a minimal list of movies that maximizes the user's satisfaction and also covers all user's preferences, irrespective of the user's mood.

The optimal solution to our problem is a list of two movies, $S = (1,3)$. When the user prefers \emph{Action} movies, the user selects the first recommended movie in the list, \emph{Inception}, and is satisfied with probability $0.8$. This is substantially greater than if \emph{Spider-Man 2}, another \emph{Action} movie, was in the list instead of \emph{Inception}. On the other hand, when the user prefers \emph{Comedy} movies, the user selects the second recommended movie in the list, \emph{Grown Ups 2}, and is satisfied with probability $0.5$. This is substantially greater than if \emph{The Sweep}, another \emph{Comedy} movie, was in the list instead of \emph{Grown Ups 2}. Note that the solution $S$ can be computed greedily. In particular, $S$ is a list of two highest-utility movies, one from each genre.

Now suppose that we add to the ground set a movie that is both \emph{Action} and \emph{Comedy}, and its utility is $0.7$:
\begin{center}
  \begin{tabular}{ccccc} \hline
    ID & Movie & Utility & Action & Comedy \\
    $e$ & name & $\bw(e)$ & & \\ \hline
    1 & Inception & 0.8 & X & \\
    2 & Spider-Man 2 & 0.7 & X & \\
    3 & Grown Ups 2 & 0.5 & & X \\
    4 & The Sweep & 0.2 & & X \\
    5 & Kindergarten Cop & 0.6 & X & X \\ \hline
  \end{tabular}
\end{center}
The optimal solution to the problem is a list $S = (1, 5)$. When the user prefers \emph{Action} movies, the user selects the first recommended movie, \emph{Inception}, and is satisfied with probability $0.8$. On the other hand, when the user prefers \emph{Comedy} movies, the user selects the second recommended movie, \emph{Kindergarten Cop}, and is satisfied with probability $0.6$. Note again that the solution $S$ can be computed greedily. It is a list of two highest-utility movies, one from each genre.

Finally, we replace the last movie in the ground set with a movie whose utility is 0.9:
\begin{center}
  \begin{tabular}{ccccc} \hline
    ID & Movie & Utility & Action & Comedy \\
    $e$ & name & $\bw(e)$ & & \\ \hline
    1 & Inception & 0.8 & X & \\
    2 & Spider-Man 2 & 0.7 & X & \\
    3 & Grown Ups 2 & 0.5 & & X \\
    4 & The Sweep & 0.2 & & X \\
    5 & Indiana Jones and  & 0.9 & X & X \\ 
    & the Last Crusade &  &  &  \\ \hline
  \end{tabular}
\end{center}
The optimal solution to the problem is a single movie, $S = (5)$. The reason is that \emph{Indiana Jones and the Last Crusade} is the highest-utility movie in both \emph{Action} and \emph{Comedy}. Hence, it is the best recommendation irrespective of the user's mood. Note again that the solution $S$ can be computed greedily. It is the highest-utility movie that belongs to both genres

In all three examples, the optimal solutions can be computed greedily. This is not by chance. In the next section, we generalize the ideas expressed in these examples and introduce the notion of diverse recommendations where the optimal solution can be found greedily. This is the main contribution of our paper.

\commentout{Consider the problem of recommending a list of diverse movies based on a profile that may belong to a group of users (or a single user). Let $E = \set{e_1, e_2, e_3, e_4}$ be a set of four movies that belong to the following genres:
\begin{align*}
  e_1 & = \set{\textit{Action}} \\
  e_2 & = \set{\textit{Comedy}} \\
  e_3 & = \set{\textit{Comedy}} \\
  e_4 & = \set{\textit{Family}, \textit{Musical}, \textit{Drama}}.
\end{align*}
Let $\bw = (1, 0.8, 0.5, 0.2)$, where $\bw(e_i)$ is the popularity of movie $e_i$ for this profile. The objective is to find a ranking of movies that places the popular and diverse (cover various genres) movies higher in the list. We assume that the diversity function $f(S)$ is defined as the number of movie genres covered by movies in $S$.

It can be shown that $\mmr$ can return four possible rankings of items, depending on the value of the parameter $\lambda$. These lists are reported in Table~\ref{tab:example}. The first two lists, when $\lambda \geq \frac{3}{10}$, do not seem appealing because the least popular item, $e_4$, appears high in the lists. The last list, when $\lambda < \frac{1}{10}$, does not seem appealing either because $e_3$ covers the same movie genre as $e_2$, and appears before $e_4$ that is much more diverse.

Roughly speaking, the third list seems to be most pleasing because all the genres are covered by the first three movies, which are also the most popular movies in those genres. Unfortunately, the performance of $\mmr$ depends heavily on the value of the parameter $\lambda$, which needs to be set a-priori. Therefore, in practice, it is hard to use $\mmr$ to generate recommendation lists with high utility and diversity. In the next section, we propose a new formulation for recommending a list of diverse items that can generate lists with such properties.

\begin{table}[t]
  \centering
  {\small
  \begin{tabular}{|cccccccc|}
    \multicolumn{7}{c}{} \\
    \multicolumn{7}{c}{Ranked list for $\lambda \geq \frac{4}{10}$} \\
    \multicolumn{7}{c}{} \\ \hline
    Rank & Item & $\bw(e)$ & Action & Comedy & Family & Musical & Drama \\
    & $e$ & & & & & & \\ \hline
    1 & $e_4$ & 0.2 & & & X & X & X \\ \hline
    2 & $e_1$ & 1.0 & X &  &  & & \\ \hline
    3 & $e_2$ & 0.8 & & X & & & \\ \hline
    4 & $e_3$ & 0.5 & & X & & & \\ \hline
    \multicolumn{7}{c}{} \\
    \multicolumn{7}{c}{Ranked list for $\frac{3}{10} \leq \lambda < \frac{4}{10}$} \\
    \multicolumn{7}{c}{} \\ \hline
    Rank & Item & $\bw(e)$ & Action & Comedy & Family & Musical & Drama \\
     & $e$ & & & & & & \\ \hline
    1 & $e_1$ & 1.0 & X &  &  & & \\ \hline
    2 & $e_4$ & 0.2 & & & X & X & X \\ \hline
    3 & $e_2$ & 0.8 & & X & & & \\ \hline
    4 & $e_3$ & 0.5 & & X & & & \\ \hline
    \multicolumn{7}{c}{} \\
    \multicolumn{7}{c}{Ranked list for $\frac{1}{10} \leq \lambda < \frac{3}{10}$} \\
    \multicolumn{7}{c}{} \\ \hline
    Rank & Item & $\bw(e)$ & Action & Comedy & Family & Musical & Drama \\
     & $e$ & & & & & & \\ \hline
    1 & $e_1$ & 1.0 & X &  &  & & \\ \hline
    2 & $e_2$ & 0.8 & & X & & & \\ \hline
    3 & $e_4$ & 0.2 & & & X & X & X \\ \hline
    4 & $e_3$ & 0.5 & & X & & & \\ \hline
    \multicolumn{7}{c}{} \\
    \multicolumn{7}{c}{Ranked list for $\lambda < \frac{1}{10}$} \\
    \multicolumn{7}{c}{} \\ \hline
    Rank & Item & $\bw(e)$ & Action & Comedy & Family & Musical & Drama \\
    & $e$ & & & & & & \\ \hline
    1 & $e_1$ & 1.0 & X &  &  & & \\ \hline
    2 & $e_2$ & 0.8 & & X & & & \\ \hline
    3 & $e_3$ & 0.5 & & X & & & \\ \hline
    4 & $e_4$ & 0.2 & & & X & X & X \\ \hline
  \end{tabular}
  }
  \caption{Four possible rankings of movies recommended by $\mmr$ according to different settings of $\lambda$.}
  \label{tab:example}
\end{table}
} 

\section{Diversity-Weighted Utility Maximization}
\label{sec:model}

Our objective is to maximize the utility of recommending a list of items to a user subject to the diversity of their tastes. We present the formal definition of our method in Section~\ref{sec:problem}, followed by Section~\ref{sec:algorithm} where we show that the optimal solution of the method can be found efficiently. The interpretations and intuitions behind our method are explained in Section~\ref{sec:interpretation}. We show that the length of the list recommended by our method can be controlled by considering different diversity constraints dependent on user preferences in Section~\ref{sec:personalization}.

\subsection{Problem Formulation}
\label{sec:problem}

Let $E = \set{1, \dots, L}$ be a ground set of $L$ recommendable items, such as movies or songs. Let $\bw \in (\realset^+)^L$ be a vector of item utilities, such as item popularity scores or predicted ratings. The $e$-th entry of $\bw$, $\bw(e)$, is the utility of item $e$.

The objective of the diversification method is to maximize the satisfaction of the user subject to the diversity of their tastes. However, an increase in diversity typically comes at the account of a decrease in the utility of the items in the list, e.g., relevant but redundant items are substituted by less relevant but more diverse items. Addressing this tradeoff and striking the balance between increasing the diversity and maintaining the utility is an important challenge for any diversification method. Considering the \textit{utility} as the primary concern, we aim to expos the user to a variety of \textit{choices} in the recommendation list, while losing the minimal amount of utility in the provision of these choices.

In order to recommend a list of items that maximizes the user's utility of choice, we target at maximizing the utility of the recommendation list weighted by the increase in diversity. In other words, each increase in diversity is covered by the item with the highest possible utility. Formally, our diversity-weighted utility maximization ($\dum$) problem is formulated as:
\begin{align}
  A^\ast = \argmax_{A \in \Theta} \sum_{k = 1}^L g_A(a_k) \bw(a_k),
  \label{eq:DUM}
\end{align}
where $A = (a_1, \dots, a_L)$ is an ordered set of items $E$, $\Theta$ is the set of all permutations of $E$, and $A^\ast = (a^\ast_1, \dots, a^\ast_L)$ is the optimal solution to the problem. The vector $g_A \in (\realset^+)^L$ are the gains in diversity associated with items $E$. In particular:
\begin{align}
  g_A(e) = f(A_{k - 1} + e) - f(A_{k - 1})
  \label{eq:gain}
\end{align}
is the gain in diversity associated with choosing item $e$ given a set of previously chosen items in $A$, where $k$ is such that $a_k = e$ and $A_k = \set{a_1, \dots, a_k}$ is an unordered set of the first $k$ items in $A$. The function $f: 2^E \to \realset^+$ is a diversity function from subsets of the ground set $E$ to non-negative real numbers.

The diversity function $f$ can have many different forms. For instance, $f(X)$ can be the number of unique genres covered by movies $X$ recommended by a recommender system. Alternatively, $f(X)$ can be the average pairwise dissimilarity between a set of products $X$ recommended by a shopping website. In this work, we assume that the diversity function $f$ is \emph{monotonically increasing}:
\begin{align}
  \forall X \subseteq E, e \in E - X: f(X + e) - f(X) \geq 0,
\end{align}
the diversity of any set $X$ does not decrease when any item $e$ is added to this set. This assumption is quite natural. We also assume that $f(\emptyset) = 0$, the diversity of the empty set is zero. This assumption is without loss of generality. In particular, it can be always satisfied by subtracting $f(\emptyset)$ from $f$.

\subsection{Greedy Solution}
\label{sec:algorithm}

For a general monotonic function $f$, the optimization problem \eqref{eq:DUM} is NP-hard. However, when $f$ is submodular, the problem can be cast as finding a maximum-weight basis of a polymatroid \cite{edmonds70submodular} and can be solved greedily. We first present the greedy algorithm and then argue that it is optimal.

The pseudo-code of the greedy algorithm for \emph{diversity-weighted utility maximization ($\dum$)} is shown in Algorithm~\ref{alg:DUM}. The algorithm works as follows. First, the items $E$ are sorted in decreasing order according to their utility, $\bw(a^\ast_1) \geq \ldots \geq \bw(a^\ast_L)$, and placed into $A^\ast = (a^\ast_1, \dots, a^\ast_L)$. Then we examine the items in this order. When $g_{A^\ast}(a^\ast_k) > 0$, item $a^\ast_k$ is added to the list of recommended items $S$. When $g_{A^\ast}(a^\ast_k) = 0$, item $a^\ast_k$ is not added to $S$ because it does not contribute to the diversity of $S$. Finally, the algorithm returns the recommendation list $S$.

We illustrate $\dum$ on the second example in Section~\ref{sec:motivation}. In this example, $A^\ast = (1, 2, 5, 3, 4)$, and the diversity gains of movies $2$, $3$ and $4$ are zero due to the contribution of their preceding movies in the list. Therefore, these movies are not placed into the recommendation list, and $S = (1, 5)$.

$\dum$ has several notable properties. First, it is \emph{parameter-free}. That is, $\dum$ does not require any parameter tuning and therefore should be robust in practice. Second, $\dum$ is a greedy method and therefore is \emph{computationally efficient}. In particular, suppose that the diversity function $f$ is an oracle that can be queried in $O(1)$ time. Then the time complexity of $\dum$ is $O(L \log L)$, comparable to the complexity of sorting $L$ numbers. Finally, $\dum$ computes the \emph{optimal solution} to the optimization problem \eqref{eq:DUM}.

\begin{algorithm}[t]
  \caption{$\dum$: Diversity-Weighted Utility Maximization}
  \label{alg:DUM}
  \begin{algorithmic}
    \STATE {\bf Input:}
    \STATE \quad Ground set $E$
    \STATE \quad Weight vector $\bw$
    \STATE
    \STATE // Compute the maximum-weight basis of a polymatroid
    \STATE Let $a^\ast_1, \dots, a^\ast_L$ be an ordering of items $E$ such that:
    \STATE \quad $\bw(a^\ast_1) \geq \ldots \geq \bw(a^\ast_L)$
    \STATE $A^\ast \gets (a^\ast_1, \dots, a^\ast_L)$
    \STATE
    \STATE // Generate the list of recommended items $S$
    \STATE $S \gets ()$
    \FOR{$k = 1, \dots, L$}
      \STATE $g_{A^\ast}(a^\ast_k) \leftarrow f(A^\ast_{k - 1} + a^\ast_k) - f(A^\ast_{k - 1})$
      \IF{$(g_{A^\ast}(a^\ast_k) > 0)$}
        \STATE Append item $a^\ast_k$ to list $S$
      \ENDIF
    \ENDFOR
    \STATE
    \STATE {\bf Output:}
    \STATE \quad List of recommended items $S$
  \end{algorithmic}
\end{algorithm}

In the rest of Section~\ref{sec:model}, we analyze $\dum$ both in terms of $A^\ast$ and $S$. Note that the solutions $A^\ast$ and $S$ are equivalent in the sense that $S$ is a list obtained from $A^\ast$ by eliminating the items that have zero contribution in the objective function \eqref{eq:DUM}. Therefore, the values of the solutions are identical. So the difference in treatment is purely technical and allows us to reduce overhead in notation.

The optimality of $\dum$ can be argued based on the following observation. Our optimization problem \eqref{eq:DUM} is equivalent to maximizing a modular function on a polymatroid \cite{edmonds70submodular}, a well-known combinatorial optimization problem that can be solved greedily. In particular, let $M = (E, f)$ be a polymatroid, where $E$ is its ground set and $f$ is a submodular diversity function. Let:
\begin{align}
  & P_M = \label{eq:independence polyhedron} \\
  & \set{\bx: \bx \in \realset^L, \ \bx \geq 0,
  \ \forall X \subseteq E: \sum_{e \in X} \bx(e) \leq f(X)} \nonumber
\end{align}
be the independence polyhedron associated with function $f$. Then the maximum-weight basis of $M$ is defined as:
\begin{align}
  \bx^\ast = \argmax_{\bx \in P_M} \langle\bw, \bx\rangle,
  \label{eq:maximum-weight basis}
\end{align}
where $\bw \in (\realset^+)^L$ is a vector of non-negative weights. Because $P_M$ is a submodular polytope and the weights $\bw$ are non-negative, the optimization problem \eqref{eq:maximum-weight basis} is equivalent to finding the order of dimensions $A$ in which $\langle\bw, \bx\rangle$ is maximized \cite{edmonds70submodular}. This problem can be written formally as \eqref{eq:DUM} and has the same greedy solution as in $\dum$. In particular, the items $E$ are sorted in decreasing order according to their weights, $\bw(a^\ast_1) \geq \ldots \geq \bw(a^\ast_L)$, and placed into $A^\ast = (a^\ast_1, \dots, a^\ast_L)$. Finally, $\bx^\ast  = g_{A^\ast}$.

\subsection{Interpretation}
\label{sec:interpretation}

In this section, we discuss several interpretations of $\dum$. Without loss of generality, we assume that the different aspects of user's taste are represented by a finite set of topics $\cT = \set{1, \dots, M}$. For example, in a movie recommendation domain, these topics can be the genres of movies, such as $\cT = \set{\textit{Drama}, \textit{Comedy}, \textit{Action}}$.

Our first observation is that if the diversity of a set of items is measured by the number of unique topics covered by the items, then $\dum$ generates a list of items, where each topic is covered by the highest-utility item in that topic.

\begin{lemma}
\label{lem:topic diversity} Let the diversity function $f$ be defined as the number of topics covered by items $X$:
\begin{align*}
  f(X) = \sum_{t \in \cT} \I{\exists e \in X: \emph{item }e\emph{ covers topic }t}.
\end{align*}
Then $\dum$ returns a recommendation list $S$, where each topic $t$ is covered by the highest-utility item that belongs to $t$. Moreover, the length of $S$ is at most $\abs{\cT}$.
\end{lemma}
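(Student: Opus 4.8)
The plan is to exploit the special structure of the coverage function $f$ to give each diversity gain $g_{A^\ast}(a^\ast_k)$ an exact combinatorial meaning, and then read off both claims. First I would observe that for this particular $f$ the gain $g_{A^\ast}(a^\ast_k) = f(A^\ast_{k-1} + a^\ast_k) - f(A^\ast_{k-1})$ equals precisely the number of topics that $a^\ast_k$ covers but that none of the preceding (higher-utility) items in $A^\ast_{k-1}$ cover. This follows directly from the definition of $f$ as a sum of indicators $\I{\exists e \in X: \text{item } e \text{ covers topic } t}$: appending $a^\ast_k$ flips exactly those indicator terms whose topic $a^\ast_k$ covers and which were previously zero. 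Consequently each $g_{A^\ast}(a^\ast_k)$ is a non-negative integer, and it is strictly positive if and only if $a^\ast_k$ covers at least one topic not already covered by an item of utility at least $\bw(a^\ast_k)$.

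For the first claim, I would fix a topic $t$ and let $e_t$ be the highest-utility item covering $t$. Since the items are examined in non-increasing order of utility, every item preceding $e_t$ in $A^\ast$ has utility at least $\bw(e_t)$; by maximality of $\bw(e_t)$ among items covering $t$, none of these predecessors covers $t$. Hence, at the step that processes $e_t$, topic $t$ is still uncovered, so the gain of $e_t$ is at least $1$ and $e_t$ is appended to $S$ by the algorithm. Thus every topic $t$ is covered in $S$, and it is covered by its highest-utility item $e_t$, which is exactly the first claim.

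For the length bound I would use a telescoping argument. Every item the algorithm appends to $S$ has gain at least $1$ (a positive integer), while every omitted item has gain $0$; therefore $\abs{S} \le \sum_{k=1}^L g_{A^\ast}(a^\ast_k)$. The right-hand sum telescopes to $\sum_{k=1}^L [f(A^\ast_k) - f(A^\ast_{k-1})] = f(E) - f(\emptyset) = f(E)$, using $f(\emptyset) = 0$. Since $f(E)$ is the number of topics covered by the whole ground set, $f(E) \le \abs{\cT}$, which yields $\abs{S} \le \abs{\cT}$.

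The one place that needs care is ties in utility: if several items covering $t$ attain the maximal utility, then ``the highest-utility item'' is determined by the (arbitrary) tie-breaking used by the sort. The argument above then shows that whichever such item appears first in $A^\ast$ receives positive gain and enters $S$, while later tied items covering only $t$ receive zero gain and are correctly discarded, so the statement holds once ``highest-utility item'' is read with respect to the sorted order. I expect this tie handling, rather than the combinatorics, to be the only subtle point; the core of the proof is the single observation that for a coverage function the greedy gain counts exactly the number of newly covered topics.
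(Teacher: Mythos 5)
Your proof is correct and takes essentially the same approach as the paper: both arguments rest on the observation that the highest-utility item covering a topic $t$ is examined by $\dum$ before any other item covering $t$ (hence receives positive gain and enters $S$), and both bound the length of $S$ by noting that every appended item increases $f$ by at least $1$ while $f(E) \le \abs{\cT}$. The only differences are cosmetic: the paper proves the first claim by contradiction where you argue it directly via the ``gain counts newly covered topics'' identity, and your tie-breaking remark makes explicit a subtlety that the paper's proof silently glosses over.
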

\begin{proof}
The first claim is proved by contradiction. Let $e^\ast_t$ be the item with the highest utility that belongs to topic $t$. Suppose that item $e^\ast_t$ is not chosen by $\dum$, $e^\ast_t$ is not in list $S$ generated by $\dum$. Then $g_{A^\ast}(e^\ast_t) = 0$, which implies that another item must have covered topic $t$ before item $e^\ast_t$. However, this is a contradiction, since $e^\ast_t$ is the item with the highest utility from $t$, and therefore $\dum$ must have tested it before any other item that covers $t$.

The second claim follows from the fact that $g_{A^\ast}(a^\ast_k) > 0$ implies that the value of the diversity function $f$ increases by at least one. By definition, $f(X) \leq \abs{\cT}$ for any $X$. Therefore, the maximum number of items added to $S$ is $\abs{\cT}$.
\end{proof}

Our second observation is that our objective \eqref{eq:DUM} can be viewed as maximizing the \emph{expected utility} of choosing an item when the diversity gains $g_A(e)$ are viewed as the probabilities of choosing items. This interpretation is motivated by the cascade model \cite{craswell2008experimental} of user behavior, which considers the relationship between successive items in a list. In this model, users scan the list from top to the bottom and eventually stop because either their information need is satisfied or their patience is exhausted.

Specifically, note that for any ordering of items $A$:
\begin{align}
  \sum_{k = 1}^L g_A(a_k)
  & = \sum_{k = 1}^L [f(A_{k - 1} + a_k) - f(A_{k - 1})] \nonumber \\
  & = f(E) - f(\emptyset) + \sum_{k = 1}^{L - 1} [f(A_k) - f(A_k)] \nonumber \\
  & = f(E).
\end{align}
The first equality is due to the definition of the diversity gains \eqref{eq:gain}. The second equality follows from the fact that $A_k = A_{k - 1} + a_k$. The last equality is due to the observation that $f(\emptyset) = 0$. It follows that:
\begin{align}
  \forall e \in E: \frac{g_A(e)}{f(E)} \in [0, 1], \quad
  \frac{1}{f(E)} \sum_{k = 1}^L g_A(a_k) = 1,
\end{align}
and therefore $g_A(e) / f(E)$ can be interpreted as the probability of choosing item $e$, given that none of the earlier recommended items $A_{k - 1}$ is chosen. Under this assumption, $\sum_{k = 1}^L g_A(a_k) \bw(a_k)$ is the expected utility of choosing an item, scaled up by $f(E)$.

\subsection{Diversity Function}
\label{sec:personalization}

The length of the recommendation list $S$ generated by $\dum$ depends on the diversity function $f$. In extreme cases, this list may include all items. For instance, consider a problem where:
\begin{align}
  \forall X \subseteq E, e \in E - X: f(X + e) - f(X) > 0,
\end{align}
the diversity increases when any item $e$ is added to any subset of items $X$. Then for any ordering $A$, $g_A(e) > 0$ for all items $e$. As a result, $g_{A^\ast}(e) > 0$ for all items $e$ and $\dum$ returns the list of all items sorted in the descending order of utility. This result is mathematically correct. But it is not a very useful diverse recommendation.

To get useful diverse recommendations, it is important to control the maximum number of items returned by $\dum$, e.g., by choosing appropriate diversity functions. For instance, for the diversity function in Lemma~\ref{lem:topic diversity}, the maximum length of the recommendation list $S$ is equal to the number of topics $\abs{\cT}$. In this section, we generalize the ideas from Section~\ref{sec:interpretation} and propose another class of diversity functions that are suitable for $\dum$.

Consider the case where different users may have different tolerance for redundancy in the recommendation list due to their interests and preferences~\cite{lad2010learning, vargas2012explicit}. These differences can be modeled by a diversity function that assigns different weights to each topic of interest. In particular, the function can be defined as:
\begin{align}
  f(X) = \sum_{t \in \cT} \min\set{\sum_{e \in X} \I{\text{item }e\text{ covers topic }t}, N_t},
  \label{eq:user diversity}
\end{align}
where $N_t$ is the number of items from topic $t$ that is required to be in the recommendation list of a given user. In the next lemma, we characterize the output of $\dum$ for the above function.

\begin{lemma}
\label{lem:user diversity} Let the diversity function $f$ be defined as in \eqref{eq:user diversity}. Then $\dum$ returns a recommendation list $S$ such that each topic $t$ is covered by at least $N_t$ items of the highest utility that cover topic $t$. Moreover, the length of $S$ is at most $\sum_{t \in \cT} N_t$.
\end{lemma}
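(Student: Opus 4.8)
The plan is to follow the two-part structure of the proof of Lemma~\ref{lem:topic diversity}, generalizing each step from the ``cover once'' case ($N_t = 1$) to the ``cover $N_t$ times'' case. The first thing I would do is derive an explicit formula for the diversity gain under the function \eqref{eq:user diversity}. Writing $c_t(X) = \sum_{e \in X} \I{\text{item }e\text{ covers topic }t}$ for the number of items in $X$ covering topic $t$, the gain of appending an item $e$ to a prefix $X$ is
\begin{align*}
  f(X + e) - f(X) = \sum_{t \in \cT} \left[ \min\set{c_t(X) + \I{e \text{ covers } t}, N_t} - \min\set{c_t(X), N_t} \right].
\end{align*}
Each summand equals $1$ exactly when $e$ covers topic $t$ and $c_t(X) < N_t$, and $0$ otherwise. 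Hence $g_{A^\ast}(e)$ equals the number of topics covered by $e$ that are still \emph{unsaturated} (covered fewer than $N_t$ times) among the items preceding $e$ in $A^\ast$. In particular $g_{A^\ast}(e) > 0$ if and only if $e$ covers at least one unsaturated topic, and the gain is always a non-negative integer.

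For the first claim I would argue by contradiction, exactly as in Lemma~\ref{lem:topic diversity}. Fix a topic $t$ and let $e^t_1, e^t_2, \dots$ be the items covering $t$ listed in decreasing order of utility (ties broken as in $\dum$); the goal is to show that the top $\min\set{N_t, \text{number of items covering } t}$ of these lie in $S$. Suppose some $e^t_j$ with $j \leq N_t$ is tested at position $k$ but not appended. Because $\dum$ processes items in decreasing order of utility, the only items covering $t$ that appear in the prefix $A^\ast_{k-1}$ are precisely $e^t_1, \dots, e^t_{j-1}$, so $c_t(A^\ast_{k-1}) = j - 1 \leq N_t - 1 < N_t$. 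By the gain formula, topic $t$ alone then contributes $1$ to $g_{A^\ast}(e^t_j)$, so $g_{A^\ast}(e^t_j) \geq 1 > 0$ and $e^t_j$ must have been appended to $S$, a contradiction.

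For the second claim I would invoke the telescoping identity already established in Section~\ref{sec:interpretation}, namely $\sum_{k=1}^L g_{A^\ast}(a^\ast_k) = f(E)$. Every item appended to $S$ contributes a gain that is a positive integer, hence at least $1$, while items not appended contribute $0$; therefore $\abs{S} \leq \sum_{k=1}^L g_{A^\ast}(a^\ast_k) = f(E) = \sum_{t \in \cT} \min\set{c_t(E), N_t} \leq \sum_{t \in \cT} N_t$, which is the claimed bound.

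The one point that needs care, and the only place the argument is more delicate than in Lemma~\ref{lem:topic diversity}, is the bookkeeping in the contradiction step: the gain $g_{A^\ast}$ is computed against the full sorted prefix $A^\ast_{k-1}$ rather than against the partial list $S$, so I must count \emph{all} preceding items covering $t$, including those discarded for having zero gain. The observation that keeps this clean is that membership of $e^t_1, \dots, e^t_{j-1}$ in the prefix is forced purely by the utility ordering and is independent of which items ended up in $S$, so $c_t(A^\ast_{k-1}) = j-1$ holds regardless. Monotonicity of $f$, needed only to ensure the gains are non-negative so that ``gain $> 0$'' is the correct selection rule, is immediate since $\min\set{\cdot, N_t}$ is non-decreasing in the count.
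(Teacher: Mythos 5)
Your proof is correct and follows essentially the same route as the paper's: a contradiction argument for the first claim (if the $j$-th highest-utility item covering topic $t$, $j \leq N_t$, had zero gain, then $t$ would have to be saturated in the prefix, which is impossible since only $e^t_1, \dots, e^t_{j-1}$ precede it there), and integrality of the gains plus the bound $f(X) \leq \sum_{t \in \cT} N_t$ for the second. The only difference is one of explicitness — you derive the per-topic gain formula and spell out the prefix bookkeeping that the paper leaves implicit — but the underlying argument is identical.
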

\begin{proof}
The first claim is proved by contradiction. Let $e^\ast_{t, k}$ be the $k$-th item with the highest utility from topic $t$, where $k \leq N_t$. Suppose that item $e^\ast_{t, k}$ is not chosen by $\dum$, $e^\ast_{t, k}$ is not in list $S$ generated by $\dum$. Then $g_{A^\ast}(e^\ast_{t, k}) = 0$, which implies that topic $t$ must have been covered at least $N_t$ times before $\dum$ tests item $e^\ast_t$. However, note that this is a contradiction, since $e^\ast_{t, k}$ is among the first $N_t$ items that cover topic $t$, and therefore among the first $N_t$ items from that topic that are tested by $\dum$.

The second claim follows from the fact that $g_{A^\ast}(a^\ast_k) > 0$ implies that the value of the diversity function $f$ increases by at least one. By definition, $f(X) \leq \sum_{t \in \cT} N_t$ for any $X$. Therefore, the maximum number of items added to $S$ is $\sum_{t \in \cT} N_t$.
\end{proof}

The diversity function in \eqref{eq:user diversity} allows for controlling the length of the recommendation list $S$. In particular, if topic $t$ is irrelevant for the user, $N_t$ should be set to $0$. As a rule of thumb, more relevant topics $t$ should be assigned higher weights $N_t$.


\section{Experiments}
\label{sec:experiments}

The proposed method is evaluated in two online user studies and in an offline evaluation. In each case, we compare the performance of $\dum$ to variants of $\mmr$, since many existing diversification approaches are based on the objective function of $\mmr$ (Section~\ref{sec:related work}). In theory, the optimal solution to $\dum$ can be found greedily, while $\mmr$ finds only a $(1 - 1 / e)$-approximation to the optimal solution. Through the empirical evaluation, we show that $\dum$ satisfies the users' needs better than $\mmr$, and it is superior in recommending lists that satisfy utility and diversity at the same time.

We conduct two online studies using Amazon's Mechanical Turk\footnote{http://www.mturk.com} (MT). In the first study, we \emph{evaluate} separately the recommendation lists generated by $\dum$ and $\mmr$, by asking MT workers to identify in the lists a movie that matches their genre of interest and indicate the relevance of this movie. In the second study, we \emph{compare} the $\dum$ and $\mmr$ recommendation lists, by asking MT workers to judge  the coverage of two movie genres by the lists. We also report the findings of an offline study, where we perform a fine-grained assessment of the evaluated methods by creating user preference profiles and considering various combinations of genres.


\subsection{User Study 1}
\label{sec:study 1}

In the first study, we evaluate the diversity and utility of $\dum$ in a movie recommendation application. We compare $\dum$ to three variants of $\mmr$, which are parametrized by $\lambda \in \set{\frac{1}{3}, \frac{2}{3}, 0.99}$.

The ground set $E$ are $10$k most frequently rated IMDb\footnote{http://www.imdb.com} movies. The utility of movie $e$, $\bw(e)$, is the number of ratings assigned to this movie. The values of $\bw(e)$ are normalized such that the maximum utility is $1$, i.e., $\max_{e \in E} \bw(e) = 1$. The diversity function $f$ is defined as in Lemma~\ref{lem:topic diversity}. We also normalize $f$ such that the maximum diversity of is 1, i.e., $\max_{e \in E} f(e) = 1$. The set of topics $\cT$ includes 8 most popular movie genres in $E$:
\begin{align}
  \cT = \{&\textit{Drama}, \textit{Comedy}, \textit{Thriller}, \textit{Romance}, \\
  & \textit{Action}, \textit{Crime}, \textit{Adventure}, \textit{Horror}\}. \nonumber
\end{align}
We restrict our attention to $8$ genres only since $8$ genres can be always covered by $8$ movies, a reasonably short list of movies that can be evaluated by a MT worker.

\begin{figure}[t]
  \centering
  \includegraphics[width=3.45in]{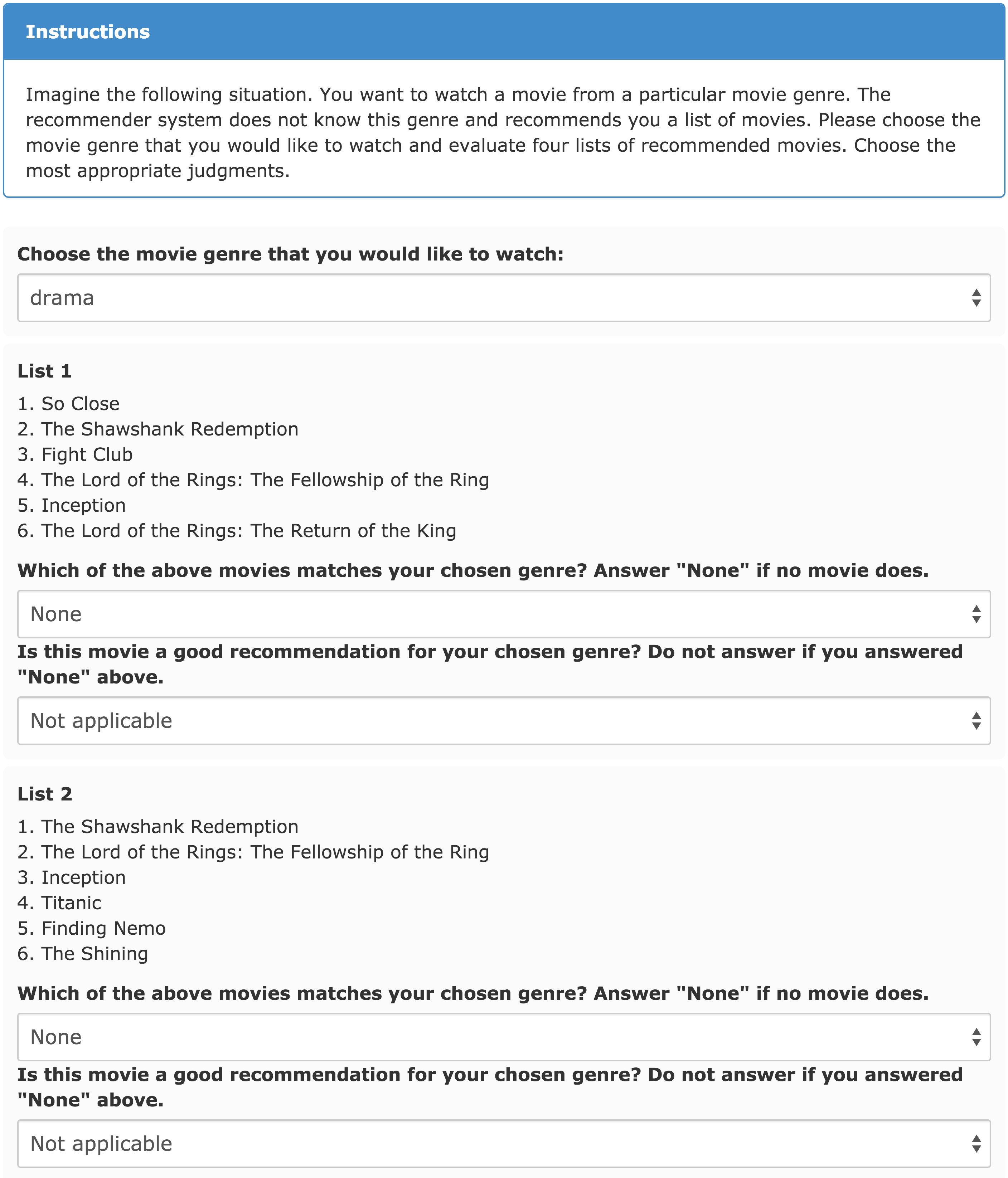}
  \caption{A portion of our Mechanical Turk questionnaire in user study $1$. We only show the first two lists of recommended movies.}
  \label{fig:questionnaire 1}
\end{figure}

All methods are evaluated in $200$ MT human intelligence tasks (HITs). In each HIT, we initially ask the worker to choose a genre of interest. Then, we generate four recommendation lists: one by $\dum$ and three by $\mmr$ for different values of $\lambda$. The generation of the lists is independent of the genre chosen by the worker. Finally, we ask the worker to evaluate the lists. For each list, we ask two questions. First, we ask the worker to identify a movie in the list that matches the chosen genre. This question addresses the diversity of the list, whether the chosen genre is covered by the list. The worker can also answer ``none'' if the list does not contain a movie from the chosen genre. If the worker identifies a matching movie, we ask the worker if the movie is a good recommendation for the chosen genre. This question addresses the utility of the list, whether the chosen genre is covered by a good movie in the list. A screenshot of our MT questionnaire is shown in Figure~\ref{fig:questionnaire 1}.

In each HIT, we present the four recommendation lists in a random order. This eliminates the \emph{position bias}. In addition, in each HIT the set of recommendable movies contains $3.3$k movies chosen at random from the $10$k movies in $E$. Hence, the recommendation lists differ across the HITs, which eliminates the \emph{item bias}, i.e., the workers cannot prefer one method over another only because the recommended movies are inherently more likable. All the recommendation lists are of the same length -- the length of the list produced by $\dum$. We adopt this methodology because we want to compare the lists for the same number of movies in the lists. Note that that we do not put $\mmr$ into any disadvantage. In particular, for any $\dum$ list, $\mmr$ can generate lists that are either of higher utility or more diverse than the $\dum$ list, when the value of $\lambda$ is large or small, respectively. This can be seen in Table~\ref{tab:example 1}, for instance.

Our HITs are completed by $34$ \emph{master} workers, who are MT's elite workers chosen based on the high quality of the work. Each worker is asked to complete at most $8$ HITs. This guarantees that our HITs are completed by more than just a handful of workers. On average, a worker spends $72$ seconds per HIT, i.e., $19$ seconds to evaluate a list of up to $8$ movies. Later in this section, we present two permutation tests that show that our results are highly unlikely under the hypothesis that the workers are of low quality, or that the questions are answered randomly. This implies that the workers have reasonable expertise in evaluating the HITs.

The results of our study are presented in Table~\ref{tab:study 1}. For each compared method, we report the percentage of times that the worker finds a movie in the list that matches the chosen genre and the percentage of times that the matching movie is a good recommendation. We observe two major trends.

\begin{table}[t]
  \centering
  {\small
  \begin{tabular}{lrrrr} \hline
    & $\dum$ & \multicolumn{3}{c}{$\mmr$} \\
    & & \multicolumn{1}{c}{$\lambda = \frac{1}{3}$} &
    \multicolumn{1}{c}{$\lambda = \frac{2}{3}$} &
    \multicolumn{1}{c}{$\lambda = 0.99$} \\ \hline
    List includes a movie that & 84.0\% & 70.5\% & 67.0\% & 66.5\% \\
    matches the chosen genre \\ 
    The chosen movie is & 77.0\% & 64.5\% & 62.5\% & 62.5\% \\
    a good recommendation \\ \hline
  \end{tabular}
  }
  \caption{Comparison of $\dum$ and $\mmr$ in user study $1$. For each method, we report the percentage of times that the worker finds a matching movie in the list and the percentage of times that the matching movie is a good recommendation.}
  \label{tab:study 1}
\end{table}

Firstly, the percentage of times that the worker finds a matching movie in the $\dum$ list is $13.5\%$ higher than in the list generated by the best performing baseline, $\mmr$ with $\lambda = \frac{1}{3}$. This result is statistically significant and we show it using a permutation test. The \emph{test statistic} is the difference in the percentage of times that the worker finds a matching movie in the lists generated by the best and second best performing methods. The \emph{null hypothesis} is that all compared methods are equally good. Under this hypothesis, we permute the answers of the workers $10^6$ times, generate an empirical distribution of the test statistic, and observe that the value of $13.5\%$ or higher is less likely than $0.0001$. So we reject the null hypothesis with $p < 0.0001$.

Secondly, the percentage of times that the worker considers the chosen movie to be a good genre-matching recommendation in the $\dum$ list is $12.5\%$ higher than in the list generated by the best performing baseline, $\mmr$ with $\lambda = \frac{1}{3}$. This result is statistically significant and we show it again using a permutation test. The \emph{test statistic} is the difference in the percentage of times that the worker finds a good recommendation in the lists generated by the best and second best performing methods. The \emph{null hypothesis} is that all compared methods are equally good. Under this hypothesis, we permute the answers of the workers $10^6$ times, generate an empirical distribution of the test statistic, and observe that the value of $12.5\%$ or higher is less likely than $0.001$. So we reject the null hypothesis with $p < 0.001$. Overall, this user study shows that the diversity and the utility of recommendation lists generated by $\dum$ are perceived superior to those of the lists generated by $\mmr$.

We note that for all the methods compared in Table~\ref{tab:study 1}, the ratio between the percentage of times that the genre-matching movie is a good recommendation and that the matching movie is found is always between $0.92$ and $0.94$. This implies that if a matching movie found, it is very likely to be considered a good recommendation. We conjecture that this is due to the high popularity of movies in the ground set $E$, which practically guarantees the utility of the recommended movies and minimizes the differences between the compared methods.

In Table~\ref{tab:example 1}, we show a real-life example illustrating how $\dum$ outperforms $\mmr$. Here, $\dum$ covers all the $8$ movie genres by popular movies. These movies are well known and can be easily matched to any chosen target genre. However, $\mmr$ with $\lambda = 0.99$ assigns insufficient weight to diversity and therefore covers only five movie genres. The result is that this $\mmr$ list is unsuitable for users who like \emph{Comedy}, \emph{Romance}, or \emph{Horror} movies. $\mmr$ with $\lambda = \frac{2}{3}$ has the same problem. On the other hand, $\mmr$ with $\lambda = \frac{1}{3}$ assigns too much weight to diversity and therefore covers four movie genres by a relatively unknown movie, ``Phibes Rises Again''. These genres are not covered by any other movie in the list. The result is that this $\mmr$ list is likely to be of a low utility for users who like \emph{Comedy}, \emph{Romance}, \emph{Adventure}, and \emph{Horror} movies.

\begin{table}[t]
  \centering
  {\small
  \begin{tabular}{ll} \hline \hline
    \multicolumn{2}{c}{$\dum$} \\ \hline
    The Shawshank Redemption & drama crime\\
    The Dark Knight & drama thriller action crime\\
    The Lord of the Rings 1 & action adventure\\
    Forrest Gump & drama romance\\
    Back to the Future & comedy adventure\\
    The Shining & drama horror \\ \hline \hline
    \multicolumn{2}{c}{$\mmr$ ($\lambda = 1 / 3$)} \\ \hline
    The Dark Knight & drama thriller action crime\\
    Dr. Phibes Rises Again & comedy romance adventure horror\\
    The Shawshank Redemption & drama crime\\
    Pulp Fiction & thriller crime\\
    Fight Club & drama\\
    The Godfather & drama crime \\ \hline \hline
    \multicolumn{2}{c}{$\mmr$ ($\lambda = 2 / 3$)} \\ \hline
    The Dark Knight & drama thriller action crime\\
    The Shawshank Redemption & drama crime\\
    The Lord of the Rings 1 & action adventure\\
    Pulp Fiction & thriller crime\\
    Fight Club & drama\\
    The Godfather & drama crime \\ \hline \hline
    \multicolumn{2}{c}{$\mmr$ ($\lambda = 0.99$)} \\ \hline
    The Shawshank Redemption & drama crime\\
    The Dark Knight & drama thriller action crime\\
    Pulp Fiction & thriller crime\\
    Fight Club & drama\\
    The Godfather & drama crime\\
    The Lord of the Rings 1 & action adventure \\ \hline \hline
  \end{tabular}
  }
  \caption{Four recommended lists in user study $1$ where $\dum$ outperforms $\mmr$.}
  \label{tab:example 1}
\end{table}


\subsection{User Study 2}
\label{sec:study 2}

In the second study, we evaluate $\dum$ on a specific problem of recommending a diverse set of movies that cover exactly two genres. We again compare $\dum$ to three variants of $\mmr$, which are parameterized by $\lambda \in \set{\frac{1}{3}, \frac{2}{3}, 0.99}$.

The compared methods are evaluated by MT workers. In each HIT, we ask the worker to consider a situation where Bob and Alice go for a vacation and can take several movies with them. Bob and Alice prefer two different movie genres. We generate four recommendation lists: one by $\dum$ and three by $\mmr$ for different values of $\lambda$. For each list, we ask the worker to indicate whether the list is appropriate for both Bob and Alice, only for one of them, or for none of them. A screenshot of our MT questionnaire is shown in Figure~\ref{fig:questionnaire 2}.

\begin{figure}[t]
  \centering
  \includegraphics[width=3.4in]{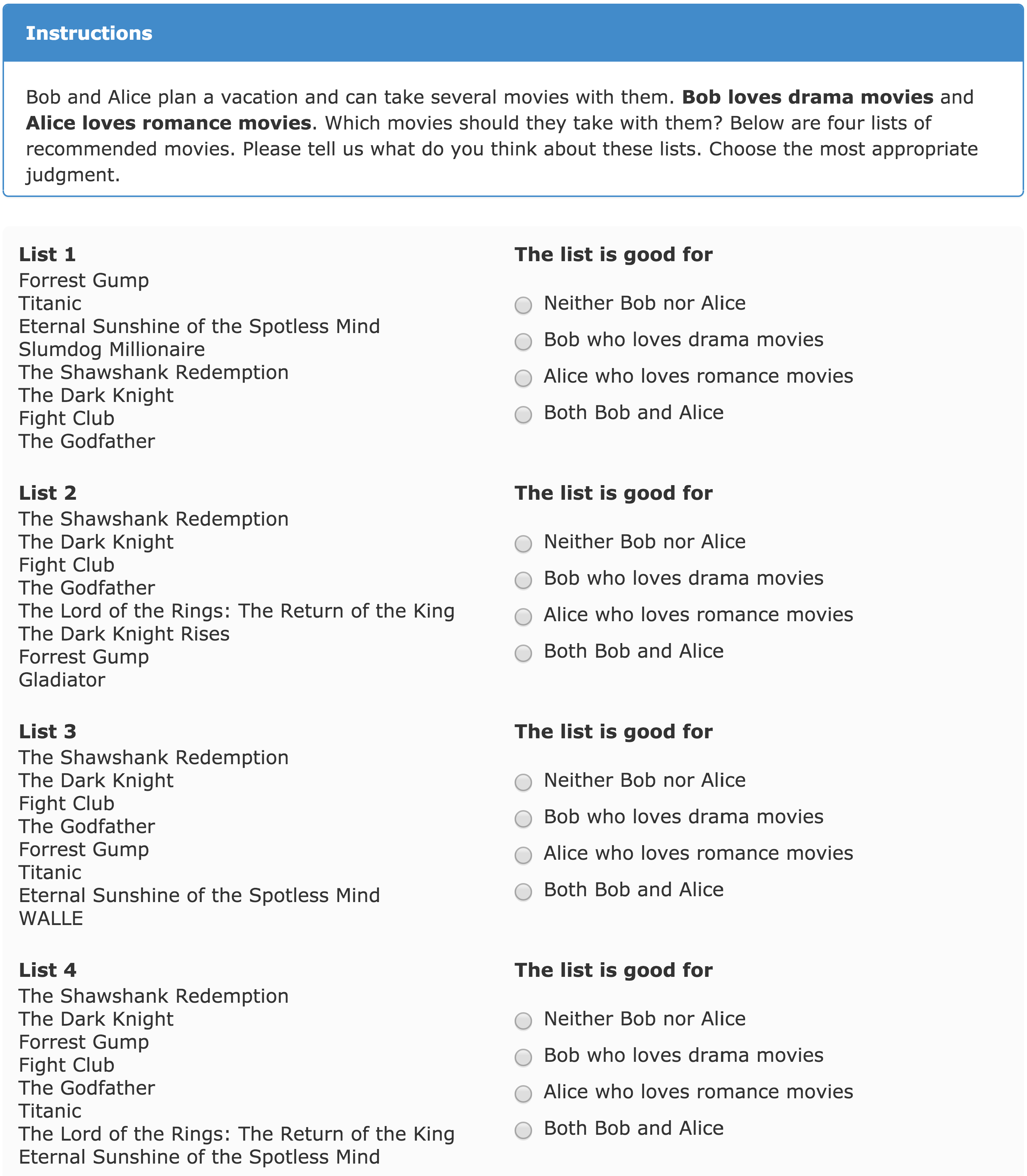}
  \caption{Our Mechanical Turk questionnaire in user study 2 for $t_1 = \textit{Drama}$ and $t_2 = \textit{Romance}$.}
  \label{fig:questionnaire 2}
\end{figure}

Each HIT is associated with two movie genres, $t_1$ and $t_2$, the preferences of Bob and Alice in the HIT. We generate three HITs for each pair of the 18 most frequent IMDb movie genres, so that the recommendation lists are evaluated $3 \frac{18 \times 17}{2} = 459$ times. Like in Section~\ref{sec:study 1}, the ground set $E$ are $10$k most frequently rated IMDb movies. The utility of movie $e$, $\bw(e)$, is the number of ratings assigned to this movie. The diversity function $f$ is defined as in Lemma~\ref{lem:user diversity}. The topics are $\cT = \set{t_1, t_2}$ and $N_{t_1} = N_{t_2} = 4$. For this setting, $\dum$ generates a list of at most 8 movies, at least 4 from each genre. The utility and diversity are normalized as in Section~\ref{sec:study 1}. In each HIT, the order of the recommendation lists is randomized and the length of the lists is determined as in Section~\ref{sec:study 1}.

\begin{table}[t]
  \centering
  \begin{tabular}{lrrrr} \hline
    Suitable & $\dum$ & \multicolumn{3}{c}{$\mmr$} \\
    for & & \multicolumn{1}{c}{$\lambda = \frac{1}{3}$} &
    \multicolumn{1}{c}{$\lambda = \frac{2}{3}$} &
    \multicolumn{1}{c}{$\lambda = 0.99$} \\ \hline
    Bob and Alice & 74.51\% & 64.92\% & 58.39\% & 28.98\% \\
    Bob or Alice & 23.53\% & 32.68\% & 39.43\% & 66.67\% \\
    Neither & 1.96\% & 2.40\% & 2.18\% & 4.36\% \\ \hline
  \end{tabular}
  \caption{Comparison of $\dum$ and $\mmr$ in user study 2. For each method, we report the percentage of times that the worker identifies the recommended list as suitable for both Bob and Alice; only for Bob or only for Alice; or for neither of them.}
  \label{tab:study 2}
\end{table}

Our HITs are completed by $57$ \emph{master} workers. Each worker is asked to complete at most $10$ HITs. This guarantees that our HITs are completed by more than just a handful of workers. On average, a worker spends $57$ seconds per HIT, i.e., $14$ seconds to evaluate a list of up to $8$ movies. In our analysis, we do not differentiate between suboptimal answers ``Suitable only for Alice'' and ``Suitable only for Bob'' and collapse the two into a single answer ``Suitable for Alice or Bob''. The results of the second user study are presented in Table~\ref{tab:study 2}.

We observe that the workers consider the $\dum$ list to be suitable for both Bob and Alice in $74.51\%$ of cases. This is $9.6\%$ higher than the best performing baseline, $\mmr$ with $\lambda = \frac{1}{3}$. This result is statistically significant and we show it using a permutation test. The \emph{test statistic} is the difference in the percentage of times that the recommended lists, generated by the best and second best performing methods, are suitable for both Bob and Alice. The \emph{null hypothesis} is that all compared methods are equally good. Under this hypothesis, we permute the answers of the workers $10^6$ times, generate an empirical distribution of the test statistic, and observe that the value of $9.6\%$ or higher is less likely than $0.0001$. So we reject the null hypothesis with $p < 0.0001$.

Similarly to Section~\ref{sec:study 1}, our permutation test can be also interpreted as showing that our results are highly unlikely under the hypothesis that the workers are of low quality, the lists are rated randomly. This implies that our workers have reasonable expertise in evaluating our HITs.

\begin{table}[t]
  \centering
  {\small
  \begin{tabular}{ll} \hline \hline
    \multicolumn{2}{c}{$\dum$} \\ \hline
    The Dark Knight & action \\
    The Lord of the Rings 1 & action \\
    The Matrix & action \\
    Inception & action \\
    The Shining & horror \\
    Alien & horror \\
    Psycho & horror \\
    Shaun of the Dead & horror \\ \hline \hline
    \multicolumn{2}{c}{$\mmr$ ($\lambda = 1 / 3$)} \\ \hline
    Zombieland & horror action \\
    From Dusk Till Dawn & horror action \\
    Dawn of the Dead & horror action \\
    Resident Evil & horror action \\
    The Dark Knight & action \\
    The Lord of the Rings 1 & action \\
    The Matrix & action \\
    Inception & action \\ \hline \hline
    \multicolumn{2}{c}{$\mmr$ ($\lambda = 2 / 3$)} \\ \hline
    The Dark Knight & action \\
    The Lord of the Rings 1 & action \\
    The Matrix & action \\
    Inception & action \\
    The Lord of the Rings 2 & action \\
    The Dark Knight Rises & action \\
    The Lord of the Rings 3 & action \\
    The Shining & horror \\ \hline \hline
    \multicolumn{2}{c}{$\mmr$ ($\lambda = 0.99$)} \\ \hline
    The Dark Knight & action \\
    The Lord of the Rings 1 & action \\
    The Matrix & action \\
    Inception & action \\
    The Lord of the Rings 2 & action \\
    The Dark Knight Rises & action \\
    The Lord of the Rings 3 & action \\
    Avatar & action \\ \hline \hline
  \end{tabular}
  }
  \caption{Four recommended lists in user study $2$ where $\dum$ outperforms $\mmr$. The topics are $t_1 = \textit{Horror}$ and $t_2 = \textit{Action}$.}
  \label{tab:example 2}
\end{table}

In Table~\ref{tab:example 2}, we show another real-life example illustrating how $\dum$ outperforms $\mmr$ for $t_1 = \textit{Horror}$ and $t_2 = \textit{Action}$. Here, $\dum$ covers each movie genre by four most popular movies from that genre. However, $\mmr$ with $\lambda = 0.99$ assigns insufficient weight to diversity and therefore recommends only most popular items that happen to be \emph{Action} movies. So the recommendation list is unsuitable for users who like \emph{Horror} movies. $\mmr$ with $\lambda = \frac{2}{3}$ has a similar behavior and is strongly dominated by \emph{Horror} movies. On the other hand, $\mmr$ with $\lambda = \frac{1}{3}$ assigns too much weight to diversity and therefore recommends many $\emph{Horror}$ movies that are also $\emph{Action}$ movies. These are less popular than the most popular $\emph{Horror}$ movies that are not $\emph{Action}$. So the list is of a low utility for users who like \emph{Horror} movies.

To sum up, $\dum$ outperforms $\mmr$ in cases, where items from one topic have a higher utility than items from the other topic, and the items at the intersection of the two topics also have a low utility. While $\dum$ recommends a mixture of high utility items from each topic, $\mmr$ either prefers items at the intersection of the topics, when the value of $\lambda$ is low; or recommends high-utility items from the dominant topic only, when the value of $\lambda$ is high.


\subsection{Offline Evaluation}
\label{sec:offline}

The main goal of the offline evaluation is to assess the performance of $\dum$ under various conditions, such as recommendations across multiple users with their interest profiles defined based on different combinations of topics.

We use the \emph{1M MovieLens} dataset \cite{movielens} in the offline evaluation. The dataset consists of 1 million ratings on a 1-to-5 stars scale, assigned by about 6000 users to about 4000 movies. We remove users having less than 300 ratings, so that for each user we have enough data to create a user profile and recommend movies. We end up with 1000 users and a total of 515k ratings.

Movies rated by each user are split randomly into the training and test set with the $2\colon1$ ratio.
We use matrix factorization~\cite{thurau2011convex} to predict the rating of movies in the test set and feed the predicted ratings as the utility scores into the $\dum$ and $\mmr$ methods. The split is performed three times for each user, and the reported results are based on the average of experiments conducted on these splits.

For each user, the training set is used for creating their interest profile, whereas the test set contains the recommendable movies (along with their actual and predicted utility). On average, a user profile is created based on 343 movies and the recommendation list is selected from a set of 171 candidates. These steps are carried out as follows:

\textbf{User profile creation:} There are 18 genres of movies in the dataset, and each movie belongs to one or more of these genres. For each user, we create a multinomial distribution over the popularity of genres of the movies rated by this user in the training data, assuming that users rated movies that they had watched. We sample 10 times from this distribution to create the user's preference profile over genres, and normalize it so that the sum of the scores equals 1. For each user with the preference score $r_t$ for genre $t$, we set $N_t = \lfloor r_t \times K\rfloor$ in~\eqref{eq:user diversity}, where $K$ is the length of the recommendation list. That is, the coverage of each genre in the result list is proportional to the preference of the user for that genre.

\begin{figure*}
  \centering
  \includegraphics[width=2.8in]{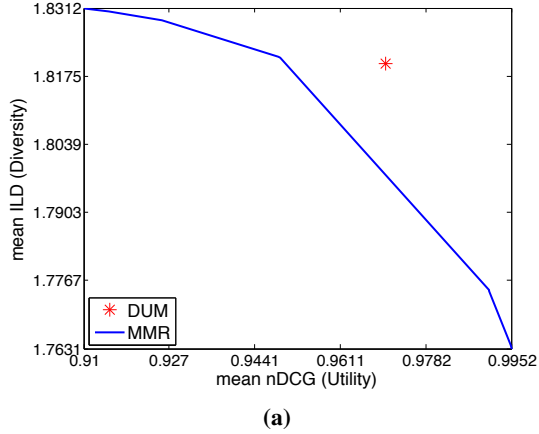} \hspace{+0.9in}
   \includegraphics[width=2.8in]{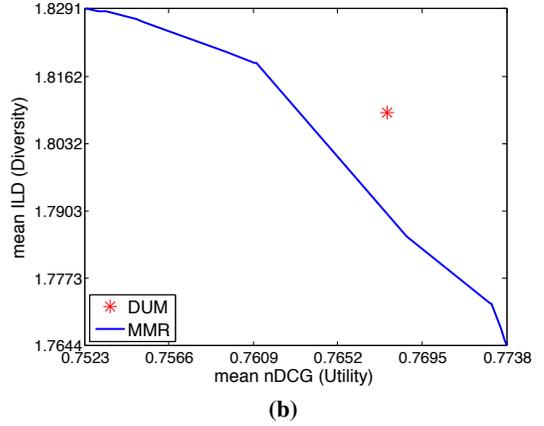}
  \text{\hspace{+0.1in}	\textbf{(a)}		\hspace{+3.6in}		\textbf{(b)}   }
  \caption{Performance of $\dum$ in terms of diversity and utility is compared to the performance of $\mmr$ for all the settings of the parameter $\lambda$. (a) The actual rating of movies is the utility score, (b) The predicted rating of movies is the utility score.}
  \label{fig:diversity-utility}
\end{figure*}

\textbf{Recommendation:} Movies in the test data are used as the ground set $E$ of recommendable movies, from which each diversification method finds the list of $K=10$ movies to recommend to each user. The predicted utility of movies is used in the recommendation. The reason for using the predicted utility instead of the readily available movie ratings is to keep the evaluation as close as possible to real-world recommendation scenarios, where the utility of items is not known. When we evaluate the performance of the studied methods, we use the actual utility score, i.e., the rating assigned by a user to a test set movie.

\begin{figure}
  \centering
  \includegraphics[width=3.3in]{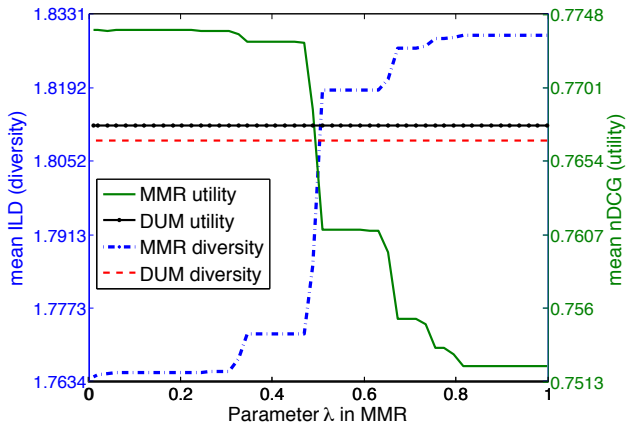}
  \caption{Tradeoff between the diversity and utility of the recommendation lists across all users for all the settings of the parameter $\lambda$ in $\mmr$. Diversity and utility scores achieved by $\dum$ (independent of $\lambda$) are shown for comparison purposes.}
\label{fig:diversity-utility-lambda}
\end{figure}

\subsubsection{Evaluation Metrics}

We use three evaluation metrics to compare the performance of $\dum$ to various settings of $\mmr$: a diversity metric, a utility metric, and a compound metric that considers both diversity and utility. We chose these particular metrics due to two reasons. First,
we wanted to evaluate the performance of our method with respect to diversity and utility individually (first two metrics), as well as in combination (third metric). Second, we wanted them to be different from the objective function of $\dum$ in order to avoid any potential bias. Thus, the compound metric combines diversity and utility in a different manner from what $\dum$ does.

\textbf{Intra-list distance (ILD)}~\cite{vargas2011rank, zhang2008avoiding} is a common metric that measures the diversity of a recommendation list as the average distance between pairs of recommended items. The dual of this measure is the intra-list similarity~\cite{ziegler2005improving}. We use ILD to measure distance based diversity of a recommendation list in our experiment:
\begin{align}
	ILD = \frac{2}{|S|(|S|-1)}  \sum_{e \in S} \sum_{e' \in S} \mathrm{d}(e, e')
	\label{eq:ILD}
\end{align}
where $\mathrm{d}(e, e')$ measures the distance between two items $e$ and $e'$ in a list $S$. We choose the Euclidean distance between the genre vectors of two movies as the distance function $d$. Note that this metric is cardinally different from the diversity function exploited by $\dum$, which is shown in~\eqref{eq:user diversity}.

\textbf{Discounted cumulative gain (DCG)}~\cite{jarvelin2002cumulated} measures the accumulated utility gain of items in the recommendations list from the top to the bottom, with the gain of each item $s_k$ being discounted by its position $k$ in the list:
\begin{align}
	DCG = \sum_{k=1}^{|S|} \frac{\bw(s_k)}{\log(k+1)}
	\label{eq:DCG}
\end{align}
Here, $\bw(s_k)$ is the utility of item $s_k$ at rank $k$ in the list. We estimate the utility of a movie for a user by the rating that the user assigned to the movie. We also use the normalized DCG (nDCG), which is in the range [0, 1]. nDCG is computed as nDCG = DCG/IDCG, where IDCG is the ideal gain achievable when all the listed items have the highest utility score.

\textbf{Expected intra-list distance (EILD)}~\cite{vargas2011rank} is a compound metric that combines utility and diversity. EILD measures the average intra-list distance (ILD) with respect to rank-sensitivity and utility:
\begin{align}
	EILD = \sum_{k = 1}^{|S|} \sum_{k' = 1}^{|S|} C_k \mathrm{disc}(k) \mathrm{rdisc}(k'|k) \bw(s_k) \bw(s_{k'}) \mathrm{d}(s_k, s_{k'})
	\label{eq:EILD}
\end{align}
where $\mathrm{disc}(k)=1/\log(k+1)$ is the discount function at rank $k$ in the list and $\mathrm{rdisc}(k'|k)=\mathrm{disc}(\max(1, k'-k))$ is the relative rank discount. In order to avoid bias, we use the normalization constant proposed in~\cite{vargas2011rank} and set $C_k = \frac{1}{C} / \sum_{k'=1}^{|S|} \mathrm{disc}(k'|k)\bw(s_{k'})$ where $C=\sum_{k=1}^{|S|}\mathrm{disc}(k)$.

We compute each of these metrics for every recommendation list provided to a user. Then, we average them across the three runs for every user to compute user-based mean of the metric. This is performed for $\dum$ and all settings of $\mmr$, and the mean of each metric for each method is computed across all the users and reported.

\subsubsection{Evaluation Results}

Figure~\ref{fig:diversity-utility} shows the performance of $\dum$ against $\mmr$ in terms of diversity and utility metrics. In Figure~\ref{fig:diversity-utility}-a, the actual rating of movies in the test set is used as the utility of movies in the recommendation step, whereas in Figure~\ref{fig:diversity-utility}-b the prediction produced by matrix factorization is used as the utility score. In both figures, $\mmr$ exhibits a trade-off between the values of mean ILD (as a measure of diversity) and mean nDCG (as a measure of utility). This trade-off is due to different values of the tuning parameter $\lambda$ in different settings of $\mmr$. For low values of $\lambda$ the utility is prioritized, such that the diversity of lists generated by $\mmr$ is low, but the utility is high. An opposite situation is observed for high $\lambda$, when the diversity gets prioritized.

It can be seen that the performance of $\dum$ with respect to both metrics is superior to any settings of $\mmr$, regardless of the way the utility score is obtained. For instance, in Figures~\ref{fig:diversity-utility}-b, $\dum$ achieves nDCG of 0.767 (compared to the highest nDCG of 0.774 achieved by $\mmr$ for $\lambda=0$) and ILD of 1.811 (compared to the highest ILD of 1.829 achieved by $\mmr$ for $\lambda=1$). It should be highlighted that the utility and diversity cannot be optimized by $\mmr$ simultaneously, since they are achieved for different values of $\lambda$, while $\dum$ competes with both of them at the same time. Also recall that $\dum$ is parameter-free, and its superiority over $\mmr$ becomes clear.

Comparing Figures~\ref{fig:diversity-utility}-a and \ref{fig:diversity-utility}-b, we observe that, as expected, the exact knowledge of the utility improves the performance of both $\dum$ and $\mmr$. However, this knowledge is unavailable to practical recommenders. Hence, we use the predicted utility values in the recommendation step in the following experiments, in order to mimic the conditions of a real-world recommendation scenario.

The trade-off between the diversity and utility objectives in $\mmr$ for various values of $\lambda$ is visualized in Figure~\ref{fig:diversity-utility-lambda}. As $\lambda$ increases, the diversity of the list recommended by $\mmr$ increases whereas its utility decreases. It can be seen that $\lambda=0.49$ is the operating point for $\mmr$, where the utility and the diversity curves intersect. On the contrary, the utility and diversity of $\dum$ are stable and both are above the operating point of $\mmr$.

Another argument in favor of $\dum$ is obtained through the EILD metric that combines diversity and utility. A comparison between $\dum$ and all the possible settings of $\mmr$ with respect to EILD are plotted in Figure~\ref{fig:ediversity-lambda}. It can be seen that $\dum$ significantly outperforms $\mmr$ for low/moderate values of $\lambda$, which correspond to cases where utility is prioritized, or both utility and diversity are similarly important. $\mmr$ starts outperforming $\dum$ for $\lambda \textgreater 0.65$, when the importance of diversity takes over, which may not be a favorite objective in real-world recommnedations. This confirms the superiority of $\dum$ in balancing the utility and diversity goals with no prior parameterization, compared to a method that explicitly targets the maximization of their weighted combination.

\begin{figure}
  \centering
  \includegraphics[width=3in]{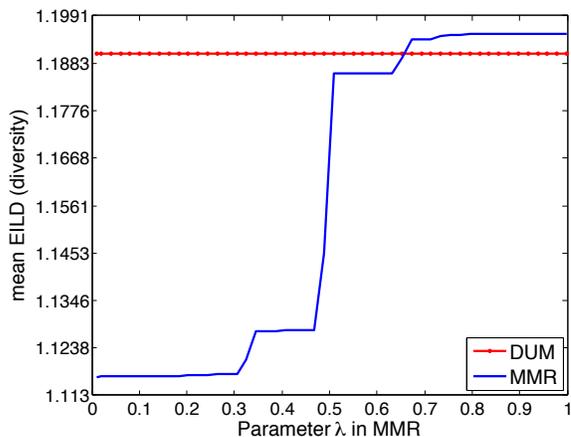}
  \caption{Performance of $\dum$ in terms of expected diversity (w.r.t. utility and rank) is compared to the performance of $\mmr$ for all the settings of the parameter $\lambda$.}
  \label{fig:ediversity-lambda}
\end{figure}


\section{Conclusion}
\label{sec:conclusion}

Much research in recommender systems has focused on the accuracy, but overlooked issues related to the composition of the recommendation lists. Increasing the diversity of the lists poses a trade-off to the utility, such that the problem of maximizing utility subject to diversity is an important challenge. In this work, we propose the diversity-weighted utility maximization ($\dum$) method and show that the problem can be cast as finding the maximum of a modular function on a polymatroid, which is known to have an optimal greedy solution. This parameter-free method guarantees that items in the recommendation list cover different aspects of user's taste, such that each aspect is covered by items with high utility.


We conduct two online user studies. The diversity and utility of $\dum$ are evaluated in a movie recommendation scenario, and the perceived diversity of $\dum$ is evaluated in a specific problem of recommending a diverse set of movies that cover exactly two genres. In both studies, we found that $\dum$ outperforms baseline models that maximize a linear combination of utility and diversity. We also report an offline evaluation of $\dum$ using a suite of diversity and utility metrics. The results show that $\dum$ effectively balances the trade-off between diversity and utility: our method achieves performance comparable to the best performing baselines of diversity and utility, if executed individually. Moreover, a combined metric of diversity and utility shows the superiority of parameter-free $\dum$ over the baseline methods that need to be parameterized.

Most diversification methods use $\mmr$ objective function, to linearly combine modular and submodular functions of utility and diversity, respectively. Our work is orthogonal to these methods in the sense that the $\dum$ objective function maximizes a modular function subject to a submodular constraint. We demonstrate significant improvements over various settings of $\mmr$, while we intend to conduct a more encompassing comparison with other variants $\mmr$ in the future. Another future direction is to account for the novelty of the recommended items ~\cite{castells2011novelty,clarke2008novelty} with respect to prior consumption history of the user. This may be incorporated into the diversity function by considering, apart from the diversity contribution, also the novelty contribution of items in the list.

Another issue that deserves further investigation is the changes that need to be introduced in the diversity metric and in the tolerance for redundancy across different domains and applications. For instance, a metric of diversity applicable for news filtering may differ substantially from the metric we derived for the movie recommendation task in this work. Furthermore, user's tolerance for redundancy of news items that are in agreement with their own opinion may differ from their tolerance for redundancy of items having an opposite opinion. We intend to address these questions in our future works.

\vspace{+8mm}

\bibliographystyle{abbrv}
\bibliography{paper}

\begin{thebibliography}{10}

\bibitem{agrawal2009diversifying}
R.~Agrawal, S.~Gollapudi, A.~Halverson, and S.~Ieong.
\newblock Diversifying search results.
\newblock In {\em Proceedings of the Second ACM International Conference on Web
  Search and Data Mining}, pages 5--14, 2009.

\bibitem{ashkan2014diversified}
A.~Ashkan, B.~Kveton, S.~Berkovsky, and Z.~Wen.
\newblock Diversified utility maximization for recommendations.
\newblock In {\em Poster Proceedings of the $8^{th}$ ACM Conference on
  Recommender Systems}, 2014.

\bibitem{capannini2011efficient}
G.~Capannini, F.~M. Nardini, R.~Perego, and F.~Silvestri.
\newblock Efficient diversification of web search results.
\newblock {\em Proceedings of the VLDB Endowment}, 4(7):451--459, 2011.

\bibitem{carbonell1998use}
J.~Carbonell and J.~Goldstein.
\newblock The use of {MMR}, diversity-based reranking for reordering documents
  and producing summaries.
\newblock In {\em Proceedings of the $21^{st}$ annual international ACM SIGIR
  conference on Research and development in information retrieval}, pages
  335--336, 1998.

\bibitem{castells2011novelty}
P.~Castells, S.~Vargas, and J.~Wang.
\newblock Novelty and diversity metrics for recommender systems: choice,
  discovery and relevance.
\newblock In {\em International Workshop on Diversity in Document Retrieval
  (DDR 2011) at the $33^{rd}$ European Conference on Information Retrieval},
  pages 29--36, 2011.

\bibitem{clarke2008novelty}
C.~L. Clarke, M.~Kolla, G.~V. Cormack, O.~Vechtomova, A.~Ashkan,
  S.~B{\"u}ttcher, and I.~MacKinnon.
\newblock Novelty and diversity in information retrieval evaluation.
\newblock In {\em Proceedings of the $31^{st}$ annual international ACM SIGIR
  conference on Research and development in information retrieval}, pages
  659--666, 2008.

\bibitem{craswell2008experimental}
N.~Craswell, O.~Zoeter, M.~Taylor, and B.~Ramsey.
\newblock An experimental comparison of click position-bias models.
\newblock In {\em Proceedings of the 2008 International Conference on Web
  Search and Data Mining}, pages 87--94, 2008.

\bibitem{edmonds70submodular}
J.~Edmonds.
\newblock Submodular functions, matroids, and certain polyhedra.
\newblock In {\em Combinatorial Structures and Their Applications: Proceedings
  of the Calgary International Conference on Combinatorial Structures and Their
  Applications}, pages 69--87. 1970.

\bibitem{gollapudi2009axiomatic}
S.~Gollapudi and A.~Sharma.
\newblock An axiomatic approach for result diversification.
\newblock In {\em Proceedings of the $18^{th}$ WWW Conference}, pages 381--390,
  2009.

\bibitem{halvey2009diversity}
M.~Halvey, P.~Punitha, D.~Hannah, R.~Villa, F.~Hopfgartner, A.~Goyal, and J.~M.
  Jose.
\newblock Diversity, assortment, dissimilarity, variety: A study of diversity
  measures using low level features for video retrieval.
\newblock In {\em Advances in Information Retrieval}, pages 126--137. 2009.

\bibitem{jannach2013recommenders}
D.~Jannach, L.~Lerche, F.~Gedikli, and G.~Bonnin.
\newblock What recommenders recommend--an analysis of accuracy, popularity, and
  sales diversity effects.
\newblock In {\em In Proceedings of the $21^{st}$ Conference on User Modeling,
  Adaptation, and Personalization}, pages 25--37. 2013.

\bibitem{jarvelin2002cumulated}
K.~J{\"a}rvelin and J.~Kek{\"a}l{\"a}inen.
\newblock Cumulated gain-based evaluation of {IR} techniques.
\newblock {\em ACM Transactions on Information Systems (TOIS)}, 20(4):422--446,
  2002.

\bibitem{KorenB11}
Y.~Koren and R.~M. Bell.
\newblock Advances in collaborative filtering.
\newblock In {\em Recommender Systems Handbook}, pages 145--186. 2011.

\bibitem{lad2010learning}
A.~Lad and Y.~Yang.
\newblock Learning to rank relevant and novel documents through user feedback.
\newblock In {\em Proceedings of the $19^{th}$ ACM international conference on
  Information and knowledge management}, pages 469--478, 2010.

\bibitem{movielens}
S.~Lam and J.~Herlocker.
\newblock {MovieLens 1M Dataset}.
\newblock http://www.grouplens.org/node/12, 2014.

\bibitem{LopsGS11}
P.~Lops, M.~de~Gemmis, and G.~Semeraro.
\newblock Content-based recommender systems: State of the art and trends.
\newblock In {\em Recommender Systems Handbook}, pages 73--105. 2011.

\bibitem{mcnee2006being}
S.~M. McNee, J.~Riedl, and J.~A. Konstan.
\newblock Being accurate is not enough: how accuracy metrics have hurt
  recommender systems.
\newblock In {\em CHI'06 extended abstracts on Human factors in computing
  systems}, pages 1097--1101, 2006.

\bibitem{nemhauser78approximation}
G.~L. Nemhauser, L.~A. Wolsey, and M.~L. Fisher.
\newblock An analysis of approximations for maximizing submodular set functions
  - {I}.
\newblock {\em Mathematical Programming}, 14(1):265--294, 1978.

\bibitem{radlinski2008learning}
F.~Radlinski, R.~Kleinberg, and T.~Joachims.
\newblock Learning diverse rankings with multi-armed bandits.
\newblock In {\em Proceedings of the $25^{th}$ international conference on
  Machine learning}, pages 784--791, 2008.

\bibitem{rsh2011}
F.~Ricci, L.~Rokach, B.~Shapira, and P.~B. Kantor, editors.
\newblock {\em Recommender Systems Handbook}.
\newblock Springer, 2011.

\bibitem{santos2010exploiting}
R.~L. Santos, C.~Macdonald, and I.~Ounis.
\newblock Exploiting query reformulations for web search result
  diversification.
\newblock In {\em Proceedings of the 19th international conference on World
  wide web}, pages 881--890, 2010.

\bibitem{thurau2011convex}
C.~Thurau, K.~Kersting, M.~Wahabzada, and C.~Bauckhage.
\newblock Convex non-negative matrix factorization for massive datasets.
\newblock {\em Knowledge and information systems}, 29(2):457--478, 2011.

\bibitem{vallet2012personalized}
D.~Vallet and P.~Castells.
\newblock Personalized diversification of search results.
\newblock In {\em Proceedings of the $35^{th}$ international ACM SIGIR
  conference on Research and development in information retrieval}, pages
  841--850, 2012.

\bibitem{vargas2011rank}
S.~Vargas and P.~Castells.
\newblock Rank and relevance in novelty and diversity metrics for recommender
  systems.
\newblock In {\em Proceedings of the fifth ACM conference on Recommender
  systems}, pages 109--116, 2011.

\bibitem{vargas2012explicit}
S.~Vargas, P.~Castells, and D.~Vallet.
\newblock Explicit relevance models in intent-oriented information retrieval
  diversification.
\newblock In {\em Proceedings of the $35^{th}$ international ACM SIGIR
  conference on Research and development in information retrieval}, pages
  75--84, 2012.

\bibitem{yu2014latent}
J.~Yu, S.~Mohan, D.~P. Putthividhya, and W.-K. Wong.
\newblock Latent dirichlet allocation based diversified retrieval for
  e-commerce search.
\newblock In {\em Proceedings of the 7th ACM international conference on Web
  search and data mining}, pages 463--472, 2014.

\bibitem{yue2011linear}
Y.~Yue and C.~Guestrin.
\newblock Linear submodular bandits and their application to diversified
  retrieval.
\newblock In {\em Advances in Neural Information Processing Systems}, pages
  2483--2491, 2011.

\bibitem{zhang2008avoiding}
M.~Zhang and N.~Hurley.
\newblock Avoiding monotony: improving the diversity of recommendation lists.
\newblock In {\em Proceedings of the 2008 ACM conference on Recommender
  systems}, pages 123--130, 2008.

\bibitem{zhang2012auralist}
Y.~C. Zhang, D.~{\'O}. S{\'e}aghdha, D.~Quercia, and T.~Jambor.
\newblock Auralist: introducing serendipity into music recommendation.
\newblock In {\em Proceedings of the fifth ACM international conference on Web
  search and data mining}, pages 13--22, 2012.

\bibitem{zhou2010solving}
T.~Zhou, Z.~Kuscsik, J.-G. Liu, M.~Medo, J.~R. Wakeling, and Y.-C. Zhang.
\newblock Solving the apparent diversity-accuracy dilemma of recommender
  systems.
\newblock {\em Proceedings of the National Academy of Sciences},
  107(10):4511--4515, 2010.

\bibitem{ziegler2005improving}
C.-N. Ziegler, S.~M. McNee, J.~A. Konstan, and G.~Lausen.
\newblock Improving recommendation lists through topic diversification.
\newblock In {\em Proceedings of the $14^{th}$ international conference on
  World Wide Web}, pages 22--32, 2005.

\end{thebibliography}

\end{document}